\let\MYcaption\@makecaption
\let\@makecaption\MYcaption
\theoremstyle{plain}
\newtheorem{lemma}{Lemma}
\newtheorem{proposition}{Proposition}
\newtheorem*{theorem*}{Theorem}
\newtheorem{assumption*}{Assumption}
\declaretheorem[name=Theorem]{thm}
\DeclareMathOperator{\atan}{atan}
\theoremstyle{definition}
\newtheorem{remark}{Remark}
\newtheorem{definition}{Definition}
\newtheorem*{problem*}{Problem}
\newcommand{\myvar}[1]{\bm{#1}}
\newcommand{\myset}[1]{\mathcal{#1}}
\DeclareMathOperator*{\argmin}{arg\,min}
\title{ \LARGE \bf
    Zero-Order Control Barrier Functions for Sampled-Data Systems \\
    with State and Input Dependent Safety Constraints
}
\author{Xiao Tan$^{\star}$, Ersin  Da\c{s}$^{\star}$, Aaron D. Ames, and Joel W. Burdick%
\thanks{$^{\star}$Both authors contributed equally.}%
\thanks{This work was supported by TII under project \#A6847 and DARPA under
the LINC Program.}
\thanks{The authors are with the Department of Mechanical and Civil Engineering, California Institute of Technology, Pasadena, CA 91125, USA. ${\tt\small \{xiaotan, ersindas, ames, jburdick \}@caltech.edu}$ } }
\begin{document}

\maketitle

\begin{abstract}
We propose a novel zero-order control barrier function (ZOCBF) for sampled-data systems to ensure system safety. Our formulation generalizes conventional control barrier functions and straightforwardly handles safety constraints with high-relative degrees or those that explicitly depend on both system states and inputs. The proposed ZOCBF condition does not require any differentiation operation. Instead, it involves computing the difference of the ZOCBF values at two consecutive sampling instants. We propose three numerical approaches to enforce the ZOCBF condition, tailored to different problem settings and available computational resources. We demonstrate the effectiveness of our approach through a collision avoidance example and a rollover prevention example on uneven terrains. 
\end{abstract}

\section{Introduction}

Controlling dynamical systems subject to both input and state constraints is a common challenge across engineering applications \cite{huang2021stability}. Control barrier functions (CBF) \cite{Ames2017} are a popular means to address constrained control problems. CBFs have been applied to numerous applications, including adaptive cruise control \cite{Ames2014}, bipedal robotic walking \cite{hsu2015control}, safe docking control \cite{saradagi2022safe}, safe coordination of multiple robots \cite{fernandez2023distributed}, among many others. The success of this approach is based on several appealing features.  Theoretically, CBFs provide sufficient and necessary Lyapunov-like conditions for safety, guaranteeing forward invariance and asymptotic convergence to a safe set. Practically, for a large class of nonlinear continuous-time systems and state dependent safety constraints, the CBF condition forms a linear constraint on the system's input, making it both flexible and easy to implement---often via a \emph{safety filter} \cite{wabersich2023data}.

However, there exist limitations to the conventional CBF approach. Safety constraints must be independent of the system input, and the function defining the safe set must have relative degree $1$. Otherwise, the CBF condition is either not well-defined or becomes non-binding. In order to accommodate these cases, several extensions to conventional CBFs have been proposed.  %
Notably, exponential CBFs \cite{nguyen2016exponential} and high-order CBFs~\cite{xiao2019control,tan2021high} are proposed for addressing the high-relative degree issue. However, checking whether the recursively defined safety function is a valid CBF in this approach is difficult in general, and the performance is highly sensitive to the chosen coefficients. Alternatively, back-stepping CBFs \cite{taylor2022safe} define barrier functions based on virtual inputs and tracking errors. However, back-stepping can encounter an ``explosion of terms'' problem when the relative degree is large. On the other hand, when the safety constraints depend both on system states and system inputs, integral CBFs~\cite{ames2020integral} and control-dependent CBFs~\cite{huang2019guaranteed} extend the system dynamics via a new virtual input, treat the original state and input as the new state, and then apply the conventional CBF approach for this extended system. For all these extensions, the relation between the input and its effects on the safety constraints is not explicit.

Digital implementation of a controller designed for continuous-time systems usually involves a zero-order hold, resulting in a sampled-data system formulation: the state evolves in continuous-time, while the control input is updated at discrete sampling instants, and held constant between those instants. Many works have investigated how to reason about safety for the sampled-data system with CBFs, using a safe-triggering strategy \cite{yang2019self} or robustifying the conventional CBF condition \cite{cortez2019control,singletary2020control,breeden2021control, roque2022corridor}. 
On the other end of the spectrum, discrete-time CBFs were proposed in \cite{agrawal2017discrete} for addressing safety-critical control of discrete-time dynamical systems. The discrete-time CBF condition compares the difference in the CBF values at every state transition. Inspired by this formulation, directly comparing the CBF values at sampling instants was also applied to sampled-data systems in \cite{breeden2021control,taylor2022safety}. However, a direct application of a discrete-time CBF condition does not provide a theoretic safety guarantee for trajectories during the period between the two sampling instants. Moreover, as will be clarified later in the paper, if the approximate discrete-time model is not well-chosen for a continuous-time system, then a high-relative degree constraint for that system will remain a high-relative degree for the discrete-time system. This phenomenon would require the discrete-time high-order CBF technique \cite{xiong2022discrete} to derive a valid CBF condition, further complicating the safety-critical control design.

In this work, we consider the safety-critical control problem for continuous-time control-affine systems with state and input dependent safety constraints, where the control decision is computed at discrete sampling instants. We propose a zero-order control barrier function (ZOCBF), which does not require a relative degree assumption on the constraint. Instead of requiring a condition on the system input $u$ based on the instantaneous system state, the ZOCBF condition is based on the predicted value of the safety constraint function at the next sampling step, taking a form similar to that in the discrete-time CBF condition. We further show that our ZOCBF approach also guarantees safety over the inter-sampling period. We then present three numerical approaches for enforcing the proposed ZOCBF condition and show its effectiveness for safety constraints with high-relative degrees or that are state-input dependent.

The remainder of the paper is organized as follows. Section II introduces preliminaries and our problem formulation. Section~\ref{sec:zero-order CBF} proposes a novel \emph{zero-order CBF} notion, which addresses general nonlinear state-input constraints. We also demonstrate its applicability in enforcing state constraints with high-relative degrees. In Section~\ref{sec:implementation}, we present three implementation approaches, ranging from model-based linearization to numerical integration to parallel simulation.

\textbf{Notation:} Let $\mathbb{R}$ and $\mathbb{R}_{+}$ denote  the reals and non-negative reals. A continuous-time signal $\myvar{x}: \mathbb{R} \to \mathbb{R}^n$ is denoted with boldface. A continuous function $\gamma: \mathbb{R} \to \mathbb{R}$ is an extended class $\mathcal{K}$ function if it is strictly increasing and $\gamma(0) = 0$.

\section{Preliminaries and problem formulation}

Consider continuous-time control-affine nonlinear systems
\begin{equation} \label{eq:system}
    \dot{\myvar{x}} = f(\myvar{x}) + g(\myvar{x})\myvar{u},
\end{equation}
where the state $\myvar{x}: \mathbb{R}_{+} \to \mathbb{R}^n$ is a continuous-time signal evolving in $\mathbb{R}^n$, the time-varying input $\myvar{u}:\mathbb{R}_{+} \to U$ evolves in the control set $ U \subset \mathbb{R}^m$, and the vector fields $f, g$ are locally Lipschitz continuous. Digital implementation of any controller introduces a sample-and-hold effect on the control
\begin{equation}
\label{eq:ZOH}
\myvar{u}(t) = u_k, \quad \forall t \in [t_k, t_{k+1}),
\end{equation}
where $u_k$ is constant between consecutive sampling instants $t_k$ and $ t_{k+1}$. We assume a uniform sampling strategy with a sampling interval $T$. As the system is piece-wise locally Lipschitz continuous, it admits a unique trajectory. Denote by $\phi(t; x, \myvar{u})$ the point reached by the trajectory starting at $x$ under input signal $\myvar{u}$ at time $t$. When the input signal is constant, i.e., $\myvar{u}(\tau) = v$ for $\tau \in [0,t)$, we also denote it by $\phi(t; x, v) $.  

In many applications, a system is expected to operate under a constraint that explicitly involves both system states and inputs. Suppose that this safety constraint is characterized by the zero-superlevel set of a function ${h \!:\! \mathbb{R}^n \!\times\! U \!\to\! \mathbb{R}}$, namely,
\begin{equation} \label{eq:safety_constraint}
    h(\myvar{x}(t),\myvar{u}(t)) \geq 0, \forall t \in \mathbb{R}_{+}.
\end{equation}
Denote the corresponding safe set by $\myset{C}\triangleq\{(x,u)\in \mathbb{R}^n \times U: h(x, u) \geq 0\}$. One such example is a zero-tilting moment point (ZMP)-based rollover avoidance constraint for mobile robots \cite{sardain2004forces}, which explicitly depends upon position, velocity, and torque commands. 

To accommodate the state-input dependent safety constraint \eqref{eq:safety_constraint} using a CBF, common approaches include a state projection-based approach and an integral CBF approach~\cite{ames2020integral}. The former approach constructs a CBF for the projected safe state set $  \myset{C}_S = \{x \in \mathbb{R}^n: \exists u\in U \textup{ s.t. } h(x, u)\geq 0\}$, derives relevant CBF conditions, and imposes the constraint $ h(x, u)\geq 0$ as a nonlinear constraint, together with the CBF condition. When these conditions are met by a control input, the feedback renders the projected state set $\myset{C}_S$ forward invariant. The latter approach computes a virtual input for a dynamically extended system. In general, the former approach involves constructing a CBF, solving a nonlinear constrained optimization problem, and potentially encountering constraint incompatibilities. The integral CBF approach does not provide a clear relation between the actual input and its effects on the safe constraint function value.

In this paper, we aim to develop a new CBF methodology for the sampled-data system in \eqref{eq:system}-\eqref{eq:ZOH} that handles both high-order constraints as well as state-input dependent constraints in a straightforward manner and provides safety guarantees at all times.

\section{Zero-order control barrier functions} \label{sec:zero-order CBF}

This section proposes a \emph{zero-order CBF} notion for sampled-data systems. Although the proposed CBF framework shares similarities with the discrete-time CBF formulation \cite{agrawal2017discrete} for discrete-time systems, our method provides safety guarantees for general state-input dependent constraints at all times. We further show that our method does not require relative degree information about the constraints.

\begin{definition}[Zero-Order Control Barrier Functions]
\label{def:dCBFs}
A continuous safety constraint function $h$ is a \textit{zero-order control barrier function} (ZOCBF) for the sampled-data system \eqref{eq:system}-\eqref{eq:ZOH} with a sampling time interval $T$, if for any ${(x_0, u_0)   \in   \mathbb{R}^n   \times U}$, there exists an input $u\in U$ such that:  
\begin{equation} \label{eq:zocbf}
 h(\phi(T;x_0,u), u) - h(x_0,u_0) \geq - \gamma(h(x_0,u_0)) + \delta,
\end{equation}
where $\delta$ is a positive constant to be specified later,  $\gamma(\cdot) $ is an extended class $\mathcal{K}$ function and $| \gamma(s)|\leq |s|$ for $s\in \mathbb{R}$. Denote the set of inputs that satisfy the ZOCBF condition by $U_{zocbf}(x_0, u_0) = \{u\in U: \eqref{eq:zocbf} \textup{ holds}\}$.

\end{definition}

When interpreting the ZOCBF condition at sampling time $t_k$, the state-input pair $(x_0,u_0)$ in Definition~\ref{def:dCBFs} represents the state at $t_k$ and the input from an infinitesimal moment ago, i.e., the constant input $u_{k-1}$ applied during the time interval $[t_{k-1}, t_k)$. Thanks to a continuity property, $\lim_{t\to t_{k}^{-}} h(\myvar{x}(t), \myvar{u}(t)) = h(x_k, u_{k-1})$. This ZOCBF condition limits the change in the values of $h$ at consecutive sampling instants $t_{k}$ and $t_{k+1}$. The parameter $\delta$ serves as a robustness buffer to account for the sampling effects, as shown below. We note that the ZOCBF condition in \eqref{eq:zocbf} can be restricted to the set $\myset{C}$ when only the forward invariance of the safe set is concerned. 

\begin{lemma} \label{lem:delta}
    Assume that $\frac{\partial h}{\partial x}(x,u)$ and $f(x) + g(x)u$  are bounded for any $ (x, u)\in \myset{C}$. Specifically, there exist constants ${\bar{h}_x >0}$ and ${M > 0 }$ such that ${\| \frac{\partial h}{\partial x} (x,u)\| \leq \bar{h}_x}$ and ${\| f(x) + g(x)u\| \leq M}$. If $\delta$ satisfies 
    \begin{equation} \label{eq:delta}
        \delta\geq \bar{h}_x M T,
    \end{equation}
    then the following implication holds:
    \begin{multline*}
       h(\phi(T;x,u), u) \geq \delta \\ \implies h(\phi(t;x,u), u) \geq 0 \textup{ for all  } t\in [0,T].
    \end{multline*}
\end{lemma}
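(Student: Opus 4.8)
The plan is to argue by contradiction along the constant-input trajectory, while carefully localizing the estimate: the bounds $\bar{h}_x$ and $M$ are only guaranteed on $\myset{C}$, so one cannot simply integrate over all of $[0,T]$ and must instead work on the largest subinterval of $[0,T]$ ending at $T$ on which the trajectory provably stays in $\myset{C}$.

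First I would introduce the scalar map $g(t) := h(\phi(t;x,u),u)$, which is continuous on $[0,T]$ because $h$ is continuous and $t\mapsto\phi(t;x,u)$ is continuous, and note that $g(T) \ge \delta > 0$ by hypothesis. Assume, toward a contradiction, that $g(t^\star) < 0$ for some $t^\star \in [0,T]$, and set $t_\star := \sup\{\,t\in[0,T] : g(t) \le 0\,\}$. The next step is to check that $g(t_\star)=0$ and $0 < t_\star < T$: the set $\{t : g(t)\le 0\}$ is nonempty and closed in $[0,T]$, so $t_\star$ is attained with $g(t_\star)\le 0$; $t_\star\neq T$ since $g(T)>0$; $g>0$ on $(t_\star,T]$, so continuity gives $g(t_\star)\ge 0$, hence $g(t_\star)=0$; and the intermediate value theorem applied to $g$ on $[t^\star,T]$ (where $g(t^\star)<0<g(T)$) produces a zero of $g$ in $(t^\star,T)$, whence $t_\star > t^\star \ge 0$.

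The crux is then that on the closed interval $[t_\star,T]$ the trajectory stays in $\myset{C}$: indeed $g(t)>0$ for $t\in(t_\star,T]$, $g(t_\star)=0$, and $u\in U$, so $(\phi(t;x,u),u)\in\myset{C}$ for every $t\in[t_\star,T]$. Hence the lemma's bounds apply pointwise on this interval, and by the chain rule $\dot g(t) = \frac{\partial h}{\partial x}(\phi(t;x,u),u)\,\bigl(f(\phi(t;x,u)) + g(\phi(t;x,u))u\bigr)$ with $|\dot g(t)|\le \bar{h}_x M$ there. By the mean value theorem (or by integrating $\dot g$), it then follows that
\[
  g(T) \;=\; g(T) - g(t_\star) \;\le\; \bar{h}_x M\,(T - t_\star) \;<\; \bar{h}_x M\,T \;\le\; \delta,
\]
where the strict inequality uses $t_\star > 0$ together with $\bar{h}_x, M > 0$, and the last step is \eqref{eq:delta}. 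This contradicts $g(T)\ge\delta$, so no such $t^\star$ exists, i.e.\ $h(\phi(t;x,u),u)\ge 0$ for all $t\in[0,T]$.

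I expect the main obstacle to be exactly this localization argument. Since $\bar{h}_x$ and $M$ bound the relevant quantities only on $\myset{C}$, the naive estimate on all of $[0,T]$ is invalid, and one has to identify the correct subinterval and, crucially, establish the strict inequality $t_\star>0$ --- without it the bound degrades to $g(T)\le\bar{h}_x M T\le\delta$, which is consistent with $g(T)\ge\delta$ and yields no contradiction. The remaining ingredient --- differentiating $g$ via the chain rule --- is routine, given the implicit regularity that makes $\frac{\partial h}{\partial x}$ well-defined and bounded in the first place.
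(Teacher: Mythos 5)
Your proof is correct, and it takes a structurally different route from the paper's. The paper argues directly: for every $t\in[0,T]$ it writes $h(\phi(T;x,u),u)-h(\phi(t;x,u),u)$ as an integral of $\frac{\partial h}{\partial x}\,(f+gu)$ over $[t,T]$, bounds it by $\bar h_x M(T-t)\le \bar h_x M T$, and concludes $h(\phi(t;x,u),u)\ge \delta-\bar h_x M T\ge 0$ --- no contradiction, no localization, and in fact a slightly stronger quantitative conclusion (a margin of $\delta-\bar h_x MT$ at every $t$). In doing so it applies the bounds $\bar h_x$ and $M$ along the whole trajectory segment without verifying membership in $\myset{C}$, i.e.\ it implicitly treats the bounds as holding wherever the trajectory goes; under the literal ``bounds only on $\myset{C}$'' hypothesis this is exactly the gap you identified, and your contradiction argument with $t_\star=\sup\{t:g(t)\le 0\}$ repairs it, since on $[t_\star,T]$ you genuinely have $(\phi(t;x,u),u)\in\myset{C}$. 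The price of your route is the need for the strict inequality $t_\star>0$ (which you correctly extract from $g(t_\star)=0>g(t^\star)$ and $t_\star\ge t^\star$); without it the estimate only gives $g(T)\le\delta$ and no contradiction. In short: the quantitative heart (bounding the change of $h$ along the flow by $\bar h_x M$ times elapsed time) is the same, but your version is the more careful one when the boundedness assumption is read as stated, while the paper's version is shorter and yields the explicit inter-sample margin.
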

\begin{proof}
Observe that, for any $ t\in [0,T]$, 
\begin{equation*}
\begin{aligned}
        | h (\phi&(T;x,u), u) - h(\phi(t;x,u), u) | \\   & \leq \left| \int_{t}^{T} \frac{\partial h}{\partial \phi}(f(x(\tau)) + g(x(\tau)) u(\tau)) \ d\tau \right| \\
        & \leq \int_{t}^{T} \left\|  \frac{\partial h}{\partial \phi} \right\| \| f(x(\tau)) + g(x(\tau)) u(\tau)\| \ d\tau \\
        & \leq \bar{h}_x M (T-t) \leq  \bar{h}_x M T.
\end{aligned}
\end{equation*}
 Thus, when $ h(\phi(T;x,u), u)\geq \delta$,  we obtain that  $$
 h (\phi(t;x,u), u) \geq \delta - \bar{h}_x M T \geq 0, \  \forall t\in [0,T].  \eqno\qedhere
$$
 \end{proof}
We note that because the sampling period $T$ is small in practice, $\delta$ in \eqref{eq:delta} is typically small in magnitude.

\begin{thm}
\label{teo:zocbfdef}
    Consider the sampled-data system \eqref{eq:system}-\eqref{eq:ZOH} with the sampling period $T$. If $h: \mathbb{R}^n   \times U   \to \mathbb{R}$ is a \textit{zero-order control barrier function} (ZOCBF) with $\delta$ satisfying \eqref{eq:delta}, then any sampled-and-hold controller $\myvar{u}(t) = u_k, t\in [t_k, t_{k+1})$ that satisfies the ZOCBF condition \eqref{eq:zocbf}, i.e., $u_k\in U_{zocbf}(\myvar{x}(t_k),\myvar{u}(t_{k-1}))$, which is given in Definition~\ref{def:dCBFs}, renders the system safe at all times. That is, the system trajectory satisfies the following implications:
    \begin{equation} \label{eq:safety guarantee}
        \begin{aligned}
        \exists v\in U, \ h(x_0,v) \geq 0 &  \implies h(\myvar{x}(t), \myvar{u}(t)) \geq 0, \forall t \geq 0, \\
        \forall v\in U, \    h(x_0,v) < 0 & \implies \lim_{t\to \infty}  \| (\myvar{x}(t), \myvar{u}(t)) \|_{\myset{C}} = 0,
        \end{aligned}
    \end{equation}
    where $x_0 $ is the initial state of the system.
\end{thm}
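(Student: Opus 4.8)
\section*{Proof proposal}

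The plan is to collapse the continuous-time claim into a scalar one-step recursion on the \emph{sampled} values of $h$, and then use Lemma~\ref{lem:delta} to fill in the inter-sampling intervals. Write $x_k := \myvar{x}(t_k)$. Since $\myvar{u}\equiv u_k$ on $[t_k,t_{k+1})$ and $\myvar{x}$ is continuous, $x_{k+1}=\phi(T;x_k,u_k)$ and $h(\myvar{x}(t),\myvar{u}(t))=h(\phi(t-t_k;x_k,u_k),u_k)$ for $t\in[t_k,t_{k+1})$. Introduce the sampled sequence $\eta_k := h(x_k,u_{k-1})$, which is exactly the left limit $\lim_{t\to t_k^-}h(\myvar{x}(t),\myvar{u}(t))$ noted after Definition~\ref{def:dCBFs}. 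Instantiating \eqref{eq:zocbf} at $t_k$ with $(x_0,u_0)\leftarrow(x_k,u_{k-1})$ and $u\leftarrow u_k\in U_{zocbf}(x_k,u_{k-1})$ gives the recursion
\begin{equation*}
\eta_{k+1} \;=\; h(\phi(T;x_k,u_k),u_k) \;\geq\; \eta_k - \gamma(\eta_k) + \delta .
\end{equation*}
I will also use the two elementary consequences of ``$\gamma$ extended class $\mathcal{K}$ with $|\gamma(s)|\leq|s|$'': $s-\gamma(s)\geq 0$ for $s\geq 0$, and $-\gamma(s)>0$ for $s<0$.

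For the first implication, take the controller's initial past input $u_{-1}$ to be a $v\in U$ with $h(x_0,v)\geq 0$ (such $v$ exists by hypothesis), so $\eta_0\geq 0$. A one-line induction on the recursion then yields $\eta_k\geq\delta$ for all $k\geq 1$: if $\eta_k\geq 0$, then $\eta_{k+1}\geq (\eta_k-\gamma(\eta_k))+\delta\geq\delta$. Hence $h(\phi(T;x_k,u_k),u_k)=\eta_{k+1}\geq\delta$ for every $k\geq 0$, so Lemma~\ref{lem:delta} (with $x\leftarrow x_k$, $u\leftarrow u_k$) gives $h(\phi(\tau;x_k,u_k),u_k)\geq 0$ for all $\tau\in[0,T]$. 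Since every $t\geq 0$ lies in some $[t_k,t_{k+1})$ and $h(\myvar{x}(t),\myvar{u}(t))=h(\phi(t-t_k;x_k,u_k),u_k)$, this proves $h(\myvar{x}(t),\myvar{u}(t))\geq 0$ for all $t\geq 0$.

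For the second implication, the hypothesis forces $\eta_0=h(x_0,u_{-1})<0$. While $\eta_k<0$ the recursion gives $\eta_{k+1}\geq\eta_k-\gamma(\eta_k)+\delta\geq\eta_k+\delta$, so as long as $\eta_0,\dots,\eta_{k-1}<0$ we have $\eta_k\geq\eta_0+k\delta$; consequently there is a first index $K_0\leq\lceil|\eta_0|/\delta\rceil$ with $\eta_{K_0}\geq 0$. From step $K_0$ onward the argument of the first part applies unchanged (the induction $\eta_k\geq 0\Rightarrow\eta_{k+1}\geq\delta$ needs only $\eta_{K_0}\geq 0$), so $h(\myvar{x}(t),\myvar{u}(t))\geq 0$, i.e. $\|(\myvar{x}(t),\myvar{u}(t))\|_{\myset{C}}=0$, for all $t\geq t_{K_0}$. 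This is finite-time convergence to $\myset{C}$, and in particular $\lim_{t\to\infty}\|(\myvar{x}(t),\myvar{u}(t))\|_{\myset{C}}=0$.

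The technical heart of the proof is the passage from the discrete inequality on $\eta_k$ to continuous-time safety, which is precisely what Lemma~\ref{lem:delta} and the buffer $\delta$ are designed to handle; modulo that lemma everything reduces to bookkeeping on the scalar sequence $\eta_k$. The only mildly delicate points I anticipate are (i) fixing the ``initial past input'' $u_{-1}$ so that $\eta_0\geq 0$ in the first case (the hypothesis only asserts \emph{some} $v$ works), and (ii) checking that Lemma~\ref{lem:delta} is legitimately invoked on each hold interval --- its boundedness assumption is stated on $\myset{C}$, so one wants the relevant trajectory pieces to stay in $\myset{C}$, which the induction itself supplies.
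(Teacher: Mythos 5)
Your proposal is correct and follows essentially the same route as the paper: the same one-step recursion on the sampled values $h(\myvar{x}(t_k),\myvar{u}(t_{k-1}))$ obtained from the ZOCBF condition, Lemma~\ref{lem:delta} with the buffer $\delta$ to cover the inter-sampling intervals, and a finite-step increase argument for the second implication; your choice of the ``initial past input'' $u_{-1}=v$ plays exactly the role of the paper's backward virtual extension of the trajectory on $[-T,0)$. The only difference is presentational --- you package the induction as an explicit scalar recursion $\eta_{k+1}\geq\eta_k-\gamma(\eta_k)+\delta$, which is a slightly cleaner bookkeeping of the same argument.
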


\begin{proof}
    We show the first implication in \eqref{eq:safety guarantee} using recursion. First, we virtually extend the trajectory and input signals backward in time by a duration $T$ by defining $\myvar{x}(t) = \phi(t;x_0,v), \myvar{u}(t) = v$ for $t\in [-T,0)$. Such a backward-time flow exists because by reversing the time variable, the terminal condition problem becomes an initial condition problem, and noting that the vector fields $f, g$ are locally Lipschitz. Note that no safety guarantee is asserted for this period. During time interval $t\in[0,T)$, since $h$ is a ZOCBF, a safe input $u_0$ exists and satisfies the ZOCBF condition, which leads to 
    \begin{equation*}
    \begin{aligned}
       h(\myvar{x}(T), u_0) & \geq (\text{Id} - \gamma)\circ h(\myvar{x}(0),v) + \delta\\
       & = (\text{Id} - \gamma)\circ h(x_0,v) + \delta \geq \delta,
    \end{aligned}
    \end{equation*}
    where $\text{Id}(\cdot)$ is the identity map.  The second inequality holds because $s - \gamma(s)\geq 0$ for any $s\geq 0$ due to the property of $\gamma$, and $h(x_0, v)\geq 0$ is one of the premises. From Lemma~\ref{lem:delta}, since $\delta$ satisfies \eqref{eq:delta}, we know that $h(\myvar{x}(t),\myvar{u}(t))\geq 0$ for $t\in [0,T)$. Now we assume that $h(\myvar{x}(t),\myvar{u}(t))\geq 0$ for $t\in [t_{k-1},t_k)$. An input $u_k$ can be found at the sampling time $t_k$ that satisfies the ZOCBF condition. We can similarly show that $h(\myvar{x}(t),\myvar{u}(t))\geq 0$ for $t\in [t_{k},t_{k+1})$. By recursion, we conclude that $h(\myvar{x}(t), \myvar{u}(t)) \geq 0$ for all $ t \geq 0$.

  To prove the second implication in \eqref{eq:safety guarantee}, we consider a similar backward trajectory extension for the interval $ [-T, 0)$ with any $v\in U$. At a sampling time $t_k\in \{0, T, \ldots \}$, if $h(\myvar{x}(t_k), \myvar{u}(t_{k-1}))<0$, from the ZOCBF condition, a suitable $u_{k}$ then exists such that $h(\myvar{x}(t_{k+1}), \myvar{u}(t_{k})) - h(\myvar{x}(t_k), \myvar{u}(t_{k-1})) \geq -\gamma(h(\myvar{x}(t_k)), \myvar{u}(t_{k-1}))+ \delta > \delta $. That is, the value of ZOCBF function $h(\myvar{x}(t_{k}), \myvar{u}(t_{k-1}))$ will increase by at least a constant $\delta$ at two consecutive sampling instants. Thus, after a finite number of steps, the value of the ZOCBF function at sampling times will become positive. Following a similar argument as above, we know the ZOCBF value will become always non-negative after a finite time. Thus, the second implication holds.
\end{proof}

Will the control input $u$ always appear in ZOCBF condition \eqref{eq:zocbf}? In the conventional continuous-time CBF literature~\cite{Ames2017}, when the constraint has a high-relative degree, i.e., when $L_g h(x) = 0$ for all $x\in \mathbb{R}^n$, the input will not appear in the CBF condition, rendering it non-binding. Particular techniques are proposed to address this issue, including exponential\cite{nguyen2016exponential}, high-order\cite{xiao2019control,tan2021high} and backstepping CBFs \cite{taylor2022safe}. For the proposed ZOCBF condition \eqref{eq:zocbf}, when the safety constraint $h$ explicitly depends upon the control input, it will naturally appear. Consider the case when the constraint function $h$ does not explicitly depend on the system input but imposes a high-relative-degree constraint. Below, we show that the input appears in our ZOCBF condition. 

Suppose for now that the safety constraint function $h$ depends only on the system state $x$ and has relative degree $\rho>1$. By taking $h(x,\cdot)$ as the system output, $y(x)= h(x,\cdot)$, we can (locally) construct a diffeomorphism $\mathcal{T}:\mathbb{R}^n \to \mathbb{R}^n$ that transforms system \eqref{eq:system} into its normal form \cite[Theorem 13.1]{Khalil2002}. Let the transformed coordinates be ${ z \!=\!  \mathcal{T}(x) \!:= \! (\xi, \eta) }$, where $\xi \in \mathbb{R}^{\rho}$ represents the external state, and $\eta\in \mathbb{R}^{n-\rho}$ the internal state. The transformed dynamics become:
\begin{equation} \label{eq:transformed_dyn}
    \begin{aligned}
        & \dot{\xi}_1  = \xi_2, ~\dot{\xi}_2  = \xi_3, \ldots, \dot{\xi}_{\rho-1}  = \xi_{\rho}, \\
        & \dot{\xi}_{\rho}  = r(x)u + a(x), \\
        & \dot{\eta} = f_0(\eta,\xi), \\ 
        & y(x)  = [1, 0, \ldots, 0] \xi = \xi_1,
    \end{aligned}
\end{equation}
where $r(x) := L_g L_f^{\rho -1} y(x)\neq 0, a(x) :=L_f^{\rho} y(x) $. More details can be found in \cite[Chapter 13.2]{Khalil2002}. Now consider the ZOCBF condition \eqref{eq:zocbf} at a sampling time $t_k$. Since the dynamics of $\xi$ form a chain of integrators and $r(x)\neq 0$ (due to $h$ having relative degree $\rho$), input $u$ will appear in $\xi_1(t_{k+1})$, or equivalently, in $h(\myvar{x}(t_{k+1}),\cdot)$. We thus conclude that $u$ will appear in the ZOCBF condition \eqref{eq:zocbf}.

\begin{remark}  
     We use the term {\em zero-order CBF} for two reasons: 1)~It is designed for sampled-data systems where the input signal is held constant using a zero-order hold; 2)~In contrast to conventional CBFs or high-order CBFs \cite{xiao2019control,tan2021high}, which require taking (high-order) time derivatives of the safety constraint to derive the CBF condition, the ZOCBF condition \eqref{eq:zocbf} requires zero differentiation operations. 
\end{remark}

\begin{remark}[ZOCBFs, discrete-time CBFs, and conventional CBFs] \label{rem:cbf_generalization}
    When the safety constraint function $h$ depends only on the system state $x$, we can define another function $b(x) := h(x,u)$. If the sampling time $T$ is small such that $\delta$ in \eqref{eq:zocbf} is negligible, then the ZOCBF condition becomes:
    \begin{equation} \label{eq:simplied_condition}
        b(\phi(T; x, u)) - b(x) \geq - \gamma(b(x)).
    \end{equation}
    \emph{Discrete-time CBFs:} The simplified condition \eqref{eq:simplied_condition} seems exactly the same as the discrete-time CBF condition in \cite{agrawal2017discrete}. Compared to a discrete-time CBF, by adding the robustness term $\delta$, the ZOCBF condition \eqref{eq:zocbf} guarantees system safety throughout the sampling intervals, while the discrete-time CBF condition only guarantees safety at the sampling time instants. From Lemma~\ref{lem:delta}, $\delta$ is small if the sampling rate is fast. It is important to note that, for some discrete-time systems and constraints, the constraints may have a high-relative degree in the discrete-time formulation.  In these cases, one must resort to applying high-order discrete-time CBFs \cite{xiong2022discrete}, which requires the control system to advance/backtrack by multiple sampling steps before the system input appears. In strong contrast, we have shown that the system input will always appear in our proposed ZOCBF condition, regardless of the safety constraint's relative degree. For numerical implementations, this feature implies that we need not worry about high-relative degree constraints in the ZOCBF framework if we can ``accurately'' predict the ZOCBF value at the next sampling instant. This difference, as well as implementation details, will be discussed in Section~IV.B and demonstrated in a numerical example in Section~V.A.
    
     \emph{Conventional CBFs:} If we approximate the trajectory flow using a forward Euler method, i.e., $ \phi(T; x, u) \approx x + (f(x) + g(x)u)T $, and approximate the function $b$ using first-order Taylor expansion, i.e., $b(\phi(T; x, u)) \approx b(x) + \frac{\partial b}{\partial x} (\phi(T;x,u) - x)$ and substitute these terms in \eqref{eq:simplied_condition}, then we recover the conventional continuous-time CBF condition:
    \begin{equation} \label{eq:conventional CBF}
        \frac{\partial b}{\partial x} (f(x) + g(x)u) + (\gamma/T)\circ b(x)  \geq 0.
    \end{equation}
   It is known that the conventional CBF is not directly applicable for scenarios where the safety constraint function depends both on state and input, or this function has a high-relative degree. Instead of resorting to high-order CBFs or integral CBFs, our proposed ZOCBF formulation can handle both cases more straightforwardly.
\end{remark}

\begin{remark}[ZOCBF and vanilla CBF constraint]
    There are two reasons why a basic safety constraint function $h(\cdot)$ may not be suitable to be used as it is for an online optimization-based controller (e.g., a model predictive control (MPC)~\cite{camacho2013model}).  
    First, this vanilla safety constraint may be nonlinear, and thus not suitable for constraining an online optimization problem, especially when it is used in each step of a multiple-step look-ahead planner, such as MPC. On the contrary, %
    our ZOCBF condition can be implemented as a convex, or even linear, constraint--see the next section. Even if the ZOCBF is nonlinear, safety is guaranteed using only one-step look-ahead prediction. Second, when directly embedded into an online nonlinear MPC,
    the vanilla constraint starts to influence the system %
    only when the (predicted) trajectory reaches the safety boundary, thus resulting in aggressive maneuvers with small safety margins. The influence of the safety constraint goes into effect much earlier if the ZOCBF condition is instead applied, leading to smoother and safer trajectories.
\end{remark}

It is important to note that in the search for a safe (and possibly optimal) input, the ZOCBF condition~\eqref{eq:zocbf} involves computation of system flows and imposes a nonlinear constraint on input $u$. The exact relation between input $u$ and the value of the safety constraint function at the next sampling instant is usually difficult to characterize analytically. To address this issue, we provide several numerical implementations in the following section.

\section{Safe control and numerical implementations} \label{sec:implementation}

   This section demonstrates how to apply our proposed ZOCBF condition as an online safety filter. Similar to conventional CBFs, for a given nominal input signal $\myvar{u}_{\text{nom}}: \mathbb{R}_{+} \to \mathbb{R}^m$, the filtered safe input is implemented via online optimization:
\begin{equation} \label{eq:safety_filter}
    \begin{aligned}
       \myvar{u}(t) & = \argmin_u \| u - \myvar{u}_{\textup{nom}}(t_k) \| \\
       \textup{s.t. } &  u \in  U_{zocbf}( \myvar{x}(t_k),\myvar{u}(t_{k-1}))
    \end{aligned}, \textup{ for $t\in [t_k, t_{k+1})$.}
\end{equation}
This optimization yields a control input that is safe and has the least deviation from the nominal control input. The nominal input signal may come from different schemes, for example, a state feedback controller or a human operator.
    
The exact ZOCBF condition can be nonlinear and not easily expressed as an analytical expression. As shown in Remark \ref{rem:cbf_generalization}, when using a forward Euler integration method to numerically approximate the flow and first-order Taylor expansion that approximates the constraint function $h$, our ZOCBF reduces to the conventional continuous-time CBF condition in \eqref{eq:conventional CBF} that is linear in the input $u$. This approach fails to consider constraints dependent on input or having a high-relative degree. We next propose several numerical approximations of the ZOCBF condition based on various techniques, including local linearization, numerical integration, and parallel simulation. We also provide some guidelines on the choice of an implementation, depending on the system dynamics and onboard computational resources.

\subsection{Approach 1: dynamics linearization}
We now introduce a method to numerically approximate the ZOCBF condition using dynamics linearization. We show that this approach retains certain convexity properties.

Define the following notations. At sampling time $t_k$, let $x_k$ be the current state, $u, u_{k-1}$ the inputs for the durations $[t_{k}, t_{k+1}), [t_{k-1}, t_{k})$, respectively. We propose to use a linear model to approximate the dynamics around the state $x_k$:
\begin{equation} \label{eq:vf linear_model}
    \dot{\myvar{\xi}} = A \myvar{\xi} + B u + C,\  \myvar{\xi}(0) = x_k, \ t\in [0,T],
\end{equation}
where $A := \frac{\partial f}{\partial x}(x_k) , B := g(x_k), C := f(x_k) -  \frac{\partial f}{\partial x}(x_k) x_k$. Unlike most dynamics linearization settings, here $x_k$ does not need to be an equilibrium point. Integrating the continuous-time time-invariant system \eqref{eq:vf linear_model} yields the solution $    \myvar{\xi}(t) = e^{At}  x_k + \int_0^t e^{A(t - \tau)} d \tau  (Bu + C).$
Our predicted state $\hat{x}_{k+1}$ at $t_{k+1}$ is  linear in $u$:
\begin{equation} \label{eq:x_k+1 linearization}
     \hat{x}_{k+1} = \myvar{\xi}(T) = A_{D} x_k + B_{D}Bu + B_{D}C,
\end{equation}
 where $A_{D}:= e^{AT},~B_{D}: = \int_0^T e^{A(T - \tau)} d \tau$.

We can approximate the function $h(x,u)$ around the state-input pair $(x_k, u_{k-1})$ using first- or second-order Taylor expansion, i.e., 
\begin{multline}
    \label{eq:first-order Taylor}
    \tilde{h}_1(x,u) \!=\! h(x_k, u_{k-1}) +\! \begin{bsmallmatrix}
        \nabla_x h (x_k, u_{k-1}) \\
        \nabla_u h (x_k, u_{k-1})
    \end{bsmallmatrix}^\top \begin{bsmallmatrix}
        x - x_{k} \\
        u - u_{k-1}
    \end{bsmallmatrix} 
\end{multline} 
or
\begin{multline}
    \label{eq:h quadratic_appro}
    \tilde{h}_2(x,u) = h(x_k, u_{k-1}) + \begin{bsmallmatrix}
        \nabla_x h (x_k, u_{k-1}) \\
        \nabla_u h (x_k, u_{k-1})
    \end{bsmallmatrix}^\top \begin{bsmallmatrix}
        x - x_{k} \\
        u - u_{k-1}
    \end{bsmallmatrix}  \\ + \begin{bsmallmatrix}
        x - x_{k} \\
        u - u_{k-1}
    \end{bsmallmatrix}^\top \mathcal{J}_{h} (x_k, u_{k-1})  \begin{bsmallmatrix}
        x - x_{k} \\
        u - u_{k-1}
    \end{bsmallmatrix},
\end{multline} 
where $\mathcal{J}_h$ is the Jacobian of the function $h$. The approximate ZOCBF condition is obtained by substituting our predicted state $\hat{x}_{k+1}$ and the approximate safety constraint function in \eqref{eq:first-order Taylor} or \eqref{eq:h quadratic_appro} into the ZOCBF condition in~\eqref{eq:zocbf}.  

When the linear approximation of the safety constraint function in \eqref{eq:first-order Taylor} is applied, the approximate ZOCBF condition becomes:
        \begin{equation}
            \left( \nabla_x h B_D B +  \nabla_u h \right) u    +
            \kappa(x_k, u_{k-1},\delta) \geq 0 , 
        \end{equation}
where we define $\kappa(x_k, u_{k-1},\delta) := \nabla_x h (A_D x_k + B_D C - x_{k})            -  \nabla_u h u_{k-1} + \gamma(h(x_{k},u_{k-1})) - \delta $ and the partial derivatives $ \nabla_x h, \nabla_u h$ are shorthanded for  $\nabla_x h(x_k, u_{k-1}),\nabla_u h(x_k, u_{k-1}) $, respectively. In this case, the approximate ZOCBF condition is a linear constraint on $u$. We can similarly derive the approximate ZOCBF condition using the quadratic approximation of $h$ in \eqref{eq:h quadratic_appro} as
    \begin{multline} \label{eq:ZOCBF linear_quadratic approx}
    \begin{bsmallmatrix}
        A_{D} x_k + B_{D}u + B_{D}C - x_{k} \\
        u - u_{k-1}
    \end{bsmallmatrix}^\top \mathcal{J}_{h}  \begin{bsmallmatrix}
        A_{D} x_k + B_{D}u + B_{D}C - x_{k} \\
        u - u_{k-1}
    \end{bsmallmatrix} \\ 
    + \left( \nabla_x h B_D +  \nabla_u h \right) u + \kappa(x_k, u_{k-1},\delta) \geq 0,
\end{multline} 
 where $\kappa(x_k, u_{k-1},\delta) $ is defined as above and $ \mathcal{J}_{h}$ is shorthanded for $\mathcal{J}_{h}(x_{k}, u_{k-1})$. We establish the following result:

\begin{proposition}
If function $h$ is concave at $(x_k, u_{k-1})$, then the approximate ZOCBF condition  \eqref{eq:ZOCBF linear_quadratic approx} defines a convex quadratic constraint set on $u$.
\end{proposition}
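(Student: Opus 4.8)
The plan is to show that the left-hand side of \eqref{eq:ZOCBF linear_quadratic approx}, viewed as a function of $u$, is a concave quadratic, so that the set where it is $\geq 0$ is convex. The key observation is that the predicted state $\hat{x}_{k+1}$ in \eqref{eq:x_k+1 linearization} is an \emph{affine} function of $u$, and likewise the stacked vector $\begin{bsmallmatrix} \hat{x}_{k+1} - x_k \\ u - u_{k-1} \end{bsmallmatrix}$ is affine in $u$. Write this vector as $w(u) = Pu + q$ for an appropriate constant matrix $P$ (built from $B_D B$ and the identity) and constant vector $q$ (built from $A_D x_k$, $B_D C$, $x_k$, and $u_{k-1}$). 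Then the quadratic term in \eqref{eq:ZOCBF linear_quadratic approx} is $w(u)^\top \mathcal{J}_h w(u) = (Pu+q)^\top \mathcal{J}_h (Pu+q)$, whose Hessian in $u$ is $2\,P^\top \mathcal{J}_h P$ (symmetrizing $\mathcal{J}_h$ if needed), and the remaining terms $(\nabla_x h\, B_D + \nabla_u h)u + \kappa$ are affine in $u$ and contribute nothing to the Hessian.

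First I would make precise the claim that $h$ being ``concave at $(x_k,u_{k-1})$'' means its Hessian $\mathcal{J}_h(x_k,u_{k-1})$ (the paper's notation for the matrix of second partials) is negative semidefinite. Second, I would substitute $w(u) = Pu + q$ into the quadratic form and compute the Hessian with respect to $u$, obtaining $P^\top(\mathcal{J}_h + \mathcal{J}_h^\top)P$, which equals $2 P^\top \mathcal{J}_h P$ when $\mathcal{J}_h$ is symmetric. Third, I would invoke the standard fact that $A \preceq 0$ implies $P^\top A P \preceq 0$ for any matrix $P$ (since $v^\top P^\top A P v = (Pv)^\top A (Pv) \leq 0$), so the Hessian of the left-hand side of \eqref{eq:ZOCBF linear_quadratic approx} is negative semidefinite. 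Fourth, I would conclude that the left-hand side is a concave function of $u$, hence its zero-superlevel set $\{u : \eqref{eq:ZOCBF linear_quadratic approx} \text{ holds}\}$ is convex, and being defined by a quadratic inequality it is a convex quadratic constraint set, completing the proof.

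I do not anticipate a serious obstacle here; the result is essentially a bookkeeping exercise once the affine dependence of $\hat{x}_{k+1}$ on $u$ is isolated. The only mild subtlety worth stating carefully is the symmetrization of $\mathcal{J}_h$ (the Hessian is symmetric for $C^2$ functions $h$, so this is automatic), and the fact that the affine terms are irrelevant to convexity. If one wants to be slightly more careful, one should note that concavity of the left-hand side is exactly the condition guaranteeing convexity of the superlevel set regardless of whether the constraint is ever active, so the conclusion is unconditional given the negative semidefiniteness hypothesis.
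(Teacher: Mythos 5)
Your proof is correct and follows essentially the same route as the paper's: extract the quadratic-in-$u$ term, observe that its coefficient matrix is a congruence $P^\top \mathcal{J}_h(x_k,u_{k-1}) P$ of the negative semidefinite Hessian and hence negative semidefinite, note the remaining terms are affine in $u$, and conclude the superlevel set is convex. Your write-up is only slightly more explicit than the paper's (symmetrization of $\mathcal{J}_h$ and the concave-function-superlevel-set step), and your $P$ built from $B_D B$ and $I$ is consistent with \eqref{eq:x_k+1 linearization}, matching the paper's intent.
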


\begin{proof}
Since the function $h$ is concave at $(x_k, u_{k-1})$, $ \mathcal{J}_h(x_k, u_{k-1})$ is a negative semidefinite matrix. We derive the quadratic term in $u$ in the left-hand side of \eqref{eq:ZOCBF linear_quadratic approx} as$$u^\top \begin{bsmallmatrix}
    B_D \\
    I
\end{bsmallmatrix}^\top \mathcal{J}_h(x_k, u_{k-1}) \begin{bsmallmatrix}
    B_D \\
    I
\end{bsmallmatrix} u.$$
Thus, the quadratic coefficient matrix is negative semidefinite. The remaining term in the left-handside of \eqref{eq:ZOCBF linear_quadratic approx} is linear in $u$. Therefore, the approximate ZOCBF in \eqref{eq:ZOCBF linear_quadratic approx} defines a convex 
 quadratic set on $u$.
\end{proof}

\subsection{Approach 2: numerical integration}

We can alternatively use numerical integration methods (forward Euler, Runge-Kutta (RK), etc) to approximate the flow map. One can generally tune the desired precision of the approximate flow at the expense of computation burden. Compared to dynamic linearization, numerical integration does not use differentiation of the nonlinear vector fields but instead uses function evaluations. This approach usually yields an explicit yet nonlinear relation between the input $u_k$ in the time interval $[t_{k}, t_{k+1})$ and the predicted state $\hat{x}_{k+1}$ of the safety constraint at $t_{k+1}$. In the following, we show that by substituting the predicted state $\hat{x}_{k+1}$ in the ZOCBF condition~\eqref{eq:zocbf}, the input will appear. For notational simplicity, we ignore the small discrepancy between the predicted and the exact states in this subsection, which will be accounted for in a following subsection.

For state-input dependent CBFs, $h(x, u)$, the input $u$ appears in~\eqref{eq:zocbf} naturally.
We restrict our discussion to state-dependent constraints $h(x,\cdot)$ here.
To explore the relationship between the order of numerical integration methods and the appearance of $u$ in the ZOCBF condition, we demonstrate the results for a widely used RK integration method. As for the continuous-time systems, we consider the transformed dynamics in its normal form as in \eqref{eq:transformed_dyn} to facilitate our discussion. The $p$-th order RK integration \cite{taylor2022safety} for an integration duration $T$ is given in the general form:
\begin{align}
\label{eq:RKn}
&z(t_{k+1}) = z(t_{k}) + T \sum_{i=1}^{p}  \kappa_i(f_z(\varrho_i) + g_z(\varrho_i)u), \\
&\varrho_i = z(t_{k}) + T \sum_{j=1}^{i-1} \lambda_{i,j}(f_z(\varrho_j) + g_z(\varrho_j)u) , \label{eq:RK_intermediate} 
\end{align}
where recall that ${z \!=\! (\xi_1,\ldots,\xi_{\rho}, \eta_1, \ldots, \eta_{n - \rho}) \!=\! \mathcal{T}(x)}\in \mathbb{R}^n$ is the transformed coordinates. We use $f_z, g_z$ for the transformed continuous-time dynamics, with their explicit expressions given below. Here intermediate states  ${\varrho_1 \!=\! z(t_{k})}$ and  ${\varrho_i, i \!=\! 2, \ldots, p}$ are recursively defined in \eqref{eq:RK_intermediate}. The parameters ${\kappa_i, \lambda_{i,j} }$ are positive constants with ${\sum_{i=1}^{p} \kappa_i \!=\! 1}$.
\begin{align*}
\begin{split}
&f_z (z) =  \begin{bmatrix}
\xi_2 &
\xi_3 &
\ldots &
\xi_{\rho} &
 a\big(\mathcal{T}^{-1}(z)\big) &
f_0(z)
\end{bmatrix}^\top, \\
&g_z(z) = \begin{bmatrix}
0 &
0 &
\ldots &
0 &
r\big(\mathcal{T}^{-1}(z)\big) &
0
\end{bmatrix}^\top.
\end{split}
\end{align*}
In view of the first row of \eqref{eq:RKn}, we obtain:
\begin{equation*}
\begin{aligned}
    z_1(t_{k+1}) & = z_1(t_{k}) + T \sum_{i=1}^{p} \kappa_i [f_z(\varrho_i) + g_z(\varrho_i) u ]_1 \\
    & = z_1(t_{k}) + T \sum_{i=1}^{p} \kappa_i [\varrho_i]_2,
\end{aligned}
\end{equation*}
Here $[\cdot]_j: (x_1, \ldots, x_n) \mapsto x_j$ is an operation that outputs the $j$th element from a vector. We apply the transformed dynamics to obtain the second equality. 
From \eqref{eq:RK_intermediate}, we know $    [\varrho_i]_2 = z_2(t_{k}) + T \sum_{j=1}^{i-1} \lambda_{i,j} [\varrho_j]_3.$
Continuing recursively for $\sigma= 3, \ldots, \rho - 1$ yields:
\begin{equation*}
[\varrho_i]_{\sigma} = z_{\sigma}(t_{k}) + T \sum_{j=1}^{i-1} \lambda_{i,j} [\varrho_j]_{\sigma+1}. 
\end{equation*}
At $[\varrho_i]_{\rho} $, where recall $\rho$ is the relative degree of the constraint, the control input $ u $ first appears:
\begin{equation*}
\label{eq:xi_rho_minus1}
[\varrho_i]_{\rho} \!=\! z_{\rho}(t_{k}) +  
T \sum_{j=1}^{i-1} \lambda_{i,j} \big(\! a\big(\mathcal{T}^{-1}(\varrho_j)\big) \!+ r\big(\mathcal{T}^{-1}(\varrho_j)\big) u \big).
\end{equation*}
Thus, $u$ appears in $z_1(t_{k+1}) $ for a constraint with relative degree $\rho$ only when we apply a $p$th-order RK integration with ${ p \!\geq \! \rho }$. In this case, the control input appears explicitly in $h(\myvar{x}(t_{k+1}))$ and influences its value, or equivalently, the ZOCBF condition.

\subsection{Approach 3: parallel simulation}
For some applications, the system dynamics and the safety constraints may be too complex to analyze analytically. Instead of using an explicit dynamical equation, as in \eqref{eq:system}, we may have access to a simulator that can compute the flow map and evaluate the function $h$ in real time. Gazebo, MuJoCo \cite{todorov2012}, and Isaac Gym \cite{makoviychuk2021}, and Brax \cite{freeman2021} simulation environments use physics engine and robot URDF description files to parallelize simulations of robotic movements. In other cases, we may have access to a physics-informed neural network (PINN) \cite{raissi2019physics} that is well-trained from input-output data, which can be evaluated quickly. In these scenarios, one can sample the control set $U$, run a large number of parallel simulations using the sampled inputs in batch, and find an input that satisfies the ZOCBF constraint in~\eqref{eq:zocbf} and minimizes the cost function in \eqref{eq:safety_filter}. Therefore, finite-sample simulations can provide exact verification of safety constraints, with high confidence, depending on the sampling density \cite{vincent2024}.

\subsection{Discussions}
All numerical implementations will create discrepancies between the approximate and the exact ZOCBF conditions. Assume that the discrepancy is bounded by function $m$:
\begin{equation}
    | h(\phi(T; x_k, u), u) - \hat{h}(\hat{x}_{k+1}, u)  | \leq m(T,u),
\end{equation}
 then, we can derive an implementable and sufficient safe condition at sampling time $t_k$ as:
\begin{equation}
   \hat{h}(\hat{x}_{k+1}, u)  \geq  (\text{Id} - \gamma)\circ (h(x_{k},u_{k-1})) + \delta + m(T,u).
\end{equation}
This condition sufficiently implies the ZOCBF condition in \eqref{eq:zocbf}, thus any feedback controller enforcing this condition guarantees system safety. 
We note that $m(T,u)$ is problem- and approach-specific. For example, if the constraint function $h:(x,u)\mapsto h(x,u)$ is Lipschitz continuous in $x$ with Lipschitz constant $c_0$ uniformly in $u$, and the fourth-order RK integration method is applied to approximate the flow, then there exists a constant $K>0$ such that  $ m(T,u) = Kc_0 T^5$. When the sampling time $T$ is small and fixed, $m(T,u)$ is a fairly small constant.

\begin{remark}
    Different approaches are suitable for different application scenarios. Dynamic linearization is most useful when the system dynamics is nearly linear. This approach provides a linear or convex quadratic constraint (supposing that the constraint function is concave) on the input. It also requires the least computation.  Numerical integration is suitable for general nonlinear dynamics models, and we can tune the precision of the predicted state via the choice of integration methods. We have shown that, 
    for high-relative degree constraints, high-order numerical integration methods yield a constraint in which the input $u$ appears. In practice, we can derive this constraint analytically using a symbolic toolbox, and include it into a nonlinear solver. 
    The sampling-based parallel simulation approach can be used when the system dynamics are complex or unknown. It however requires the most computational resources. Implementation details and comparisons of different techniques are discussed next.  
\end{remark}

\section{Simulation}
To showcase how effectively our proposed methods ensure the safety of sampled-data systems, we present two numerical examples: a double integrator system with position constraints (which have a relative degree of two), and a ground vehicle with a state-input dependent rollover constraint.\footnote{\label{footnote:code} The code and additional details of our simulations are available at \url{https://github.com/ersindas/Zero-order-CBFs}}

\subsection{Constraints with high-relative degrees}

\begin{figure}[t]
\centering
\includegraphics[width=1\linewidth]{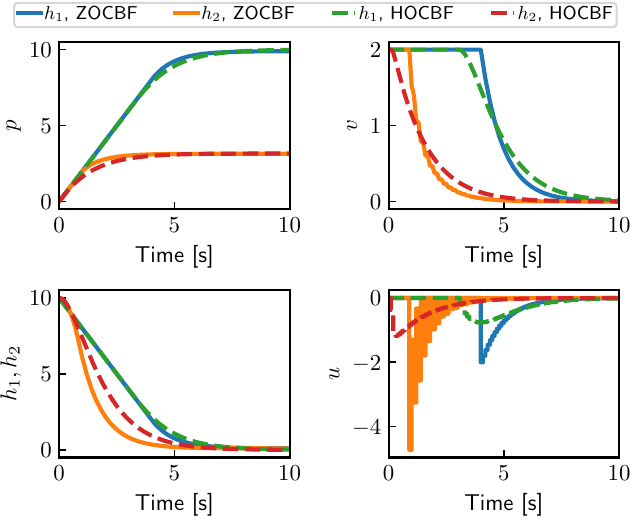}
    \caption{ Numerical simulation of the double integrator system \eqref{eq:integrator} with ${T \!=\! 0.1~s}$, ${\gamma_c \!=\! 1}$, ${u_{nom} \!=\! 0}$, ${\delta \!=\! 0.01}$, and the initial state ${x_0 \!=\! [0~2]^\top}$. The positions and velocities of the double integrator (Top). The evolution of the ZOCBFs $h_1$ and $h_2$ along the system’s trajectories and the control input over time (Bottom). In each plot, the dashed lines represent the results with discrete-time higher-order CBFs (HOCBFs) for comparison. All controllers effectively ensure safety. }
   \label{fig:integrator}
\end{figure}

Consider a double integrator model: $\dot{p} = v, \dot{v} =  u, p\in \mathbb{R}, u\in U = [-10,10]$.  Denote the state by $x = [p~v]^\top$. one can verify that the safety constraint $h_1(x) = 10 - p$ is a high-order constraint. Here we apply the first numerical approach since the system is linear. For a sampling period $T$, the discrete-time transition dynamics are 
    \begin{equation}
    \label{eq:integrator}
        \begin{bmatrix}
            p_{k+1} \\
            v_{k+1}
        \end{bmatrix} = \begin{bmatrix}
            1 & T \\
            0 & 1
        \end{bmatrix} \begin{bmatrix}
            p_{k} \\
            v_{k}
        \end{bmatrix} + \begin{bmatrix}
            0.5 T^2 \\
            T
        \end{bmatrix} u_{k}.
    \end{equation}
    Choosing the extended class $\mathcal{K}$ function $\gamma(s) = \gamma_c s$ for $s\in\mathbb{R}$ with $0<\gamma_c <1 $, we derive the ZOCBF condition in~\eqref{eq:zocbf} on the sampled input $u_k$  at sampling time $t_k$ 
    \begin{equation*}
        - 0.5 T^2 u_k  - T v_k + \gamma_c (10 - p_k) \geq \delta.
    \end{equation*}
    Thus, we obtain a ZOCBF constraint that is linear in $u_k$. Now consider another safety constraint $h_2(x) = 10 - p^2$. Again, $h_2$ is a high-order constraint. Under the same setting as above, we derive the ZOCBF condition for constraint $h_2$ on the sampled input $u_k$ at the sampling time $t_k$ 
    \begin{equation*}
       - (0.5 T^2 u_k + p_k + T v_k)^2 + p_k^2 +  \gamma_c (10 - p_k^2) \geq \delta, %
    \end{equation*}
    which is a convex quadratic constraint in $u_k$. The initial state for the simulated data shown in Fig. ~\ref{fig:integrator} is chosen to yield a safety violation under ${u_{nom}} \equiv 0$.  The figure demonstrates that the ZOCBF-based safety filter keeps the system safe.

If the double integrator dynamics are discretized in time using forward Euler integration, which is the case where the order of integration method is smaller than the relative degree, the constraint functions $h_1(x), h_2(x)$ both have a relative degree of two for the approximate discrete-time model. Conventionally, this would require the use of a discrete-time higher-order CBF technique. We also report the performance of this approach in Fig.~\ref{fig:integrator}.

In practice, one may tune the system behavior via the extended class $\mathcal{K}$ function $\gamma_c$ so that the system trajectory approaches the safety boundary more slowly or the safety filter intervenes earlier, necessitating less aggressive control inputs. In general, as discussed in Remark \ref{rem:cbf_generalization}, the term ${\gamma_c/T }$ is analogous to the extended class $\mathcal{K}$  function in conventional CBFs. Additionally, increasing the parameter $\delta$ adjusts the safety margin to prevent the system from moving too close to the boundary of the unsafe set.

\subsection{Constraints involving state and input}
Our second example looks at rollover prevention for differential drive robots that navigate over uneven terrains. The safety condition is based on the zero-tilting moment point (ZMP), as described in \cite{tipover}. This constraint depends on both the system’s state and input. To handle this constraint, we employ our proposed ZOCBF approach.

In order to develop a nonlinear dynamic model of the vehicle as it navigates 3D terrains, we derive the kinematic equations for 3D motion in an inertial frame $\mathcal{I}$.
We assume that the robot maintains ground contact, which implies that the robot's roll angle $\alpha$ and pitch angle $\beta$ equal the ground's slope angles and ${\alpha, \beta\in  (-\pi/2, \pi/2)}$.  Because the terrain varies, these angles are functions of the vehicle's positional states, denoted as $\alpha(x)$ and $\beta(x)$.  The vehicle heading angle with respect to the inertia frame ${\mathcal{I}}$ is denoted by ${{\theta} \!\in\! [0, 2 \pi ) }$.

\begin{figure}[t]
\centering
\includegraphics[width=0.70\linewidth]{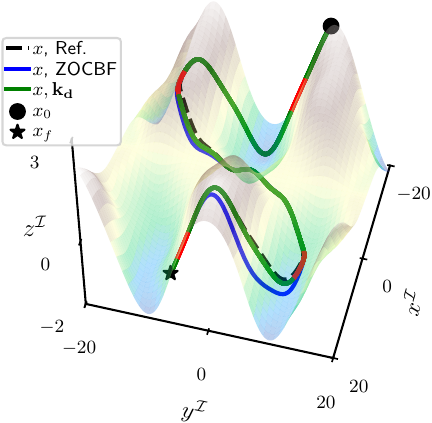}
    \caption{Simulation of a differential drive vehicle model \eqref{eq:uni_nom} navigating on uneven terrain, comparing the safety-filtered trajectory using the proposed ZOCBF filter (blue) with the unsafe nominal controller (green) while following a reference trajectory (black). The initial position is denoted as $x_0$, while the final position is denoted as $x_f$. The red areas along the blue path indicate unsafe routes on which the vehicle rollovers when under the nominal (unfiltered) controller. The vehicle is able to drive on rough terrain safely with the ZOCBF, whereas it is unsafe under the nominal controller.}
   \label{fig:3d_ZOCBF}
\end{figure}
\begin{figure}[t]
\centering
\includegraphics[width=0.8\linewidth]{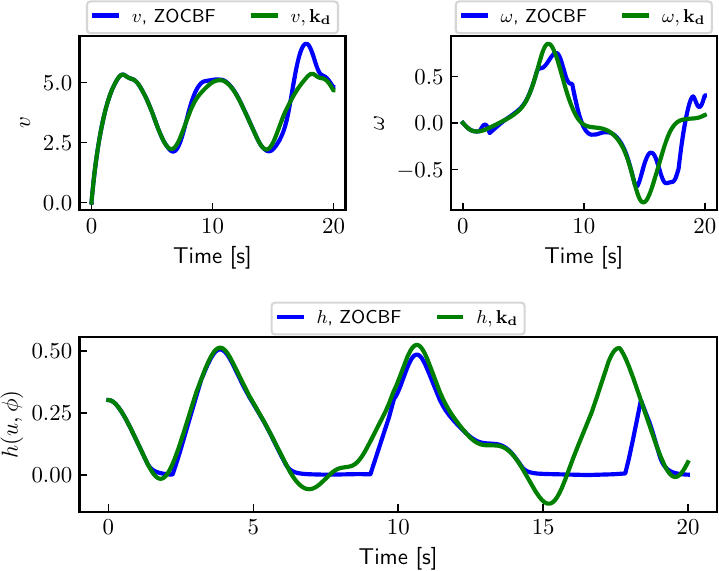}
    \caption{Control inputs with the proposed ZOCBF-based safety filter and the nominal controller over time (Top). Evolution of the rollover CBF \eqref{eq:yzmp} values for both controllers (Bottom). }
   \label{fig:h_ZOCBF}
\end{figure}

For the sake of simplicity, we assume that the robot does not slip and that the drift $f(x)$ is a zero vector. The state-space representation of the kinematic model in frame $\mathcal{I}$ is
\begin{equation}
\label{eq:uni_nom}
    \begin{bmatrix}
    \dot{x}^{\mathcal{I}} \\
    \dot{y}^{\mathcal{I}} \\
    \dot{\theta}
    \end{bmatrix}
    \!\!\!= \!\!\!
    \begin{bmatrix}
     \cos\theta \cos\beta & 0   \\
     \sin\theta \cos\beta & 0   \\
       0 &  \cos \alpha / {\cos \beta} 
    \end{bmatrix}
    \begin{bmatrix}
       v \\
      \omega  
    \end{bmatrix} ,
\end{equation}
where ${ [ {x}^{\mathcal{I}} ~ {y}^{\mathcal{I}} ]^\top \!\in\! \mathbb{R}^2}$ is vehicle position, $\omega$ represents the robot's yaw rate, and $v$ denotes linear velocity in the body frame $\mathcal{B}$. When ${\alpha = \beta = 0}$, we get the unicycle model \cite{hu2021integrated}. 

Next, we define a  path following control law based on the forward motion control law proposed in \cite{icsleyen2023} as
\begin{equation*}
    \mathbf{k_d}(x) = \begin{bmatrix}
        K_v  d_g  &
        K_{\omega} \atan \left ( d_g / {\bar{d}_g} \right )
    \end{bmatrix}^\top ,
\end{equation*}
where ${K_v,\!~K_{\omega} \!\geq\! 0}$ are the controller gains, ${[{x}_g ~y_g]^\top \!\in\! \mathbb{R}^2}$ is the goal position, and ${d_g \!\triangleq\!  \cos \theta   \left ( {x}_g \!-\! x^{\mathcal{I}} \right ) \!+\! \sin \theta   \left ( y_g \!-\!y^{\mathcal{I}} \right ) }$, ${\bar{d}_g \!\triangleq\!  -\sin \theta   \left ( x_g \!-\! x^{\mathcal{I}} \right ) \!+\! \cos \theta   \left ( y_g \!-\!y^{\mathcal{I}} \right ) }$. We set ${\omega = 0}$ when ${x_g \!=\! x^{\mathcal{I}},~y_g \!=\!y^{\mathcal{I}}}$. This acts as our nominal control input to our ZOCBF safety filter in \eqref{eq:safety_filter}.

ZMP serves as a metric to assess the risk of tipover in robotics systems. It represents the point on the ground where the combined forces of gravity and inertia generate a moment about the plane's normal direction, resulting in a tipping moment. Essentially, when the ZMP remains within the robot's support polygon—the convex hull formed by its wheel contacts—the system does not have any tendency to tipover. When the ZMP lies outside of this convex hull, the motion is considered unsafe. Rollover refers to the tipping of a robotic system around its lateral (roll) axis. The ZMP-based rollover constraint from \cite{tipover} can be expressed as 
\begin{equation}
  \label{eq:yzmp}
 h(x, u) =  -\left\vert 
    \frac{v \omega }{\mathrm{g} \cos \alpha(x) }
  \right\vert + \frac{b }{2 h_{cg}} - \tan \alpha(x) \geq 0,
\end{equation}
where $\mathrm{g}$ is the gravitational constant, $h_{cg}$  the height of the robot's center of mass, and $b$ the robot's width.  

We simulated robot operation on an undulating surface, and selected a reference path that induces rollover under the control law $\mathbf{k_d}$ (see Fig.~\ref{fig:h_ZOCBF}). We computed the system flow using a fourth-order RK numerical integration technique. We then solved the online optimization problem \eqref{eq:safety_filter} using a sequential least squares programming (SLSQP) solver. The simulation was run on a laptop with a 14-core Intel i7-12700H CPU and 32 GB of RAM, and completed in 4 seconds. The simulated closed-loop system behavior can be seen in Fig.~\ref{fig:3d_ZOCBF} and Fig.~\ref{fig:h_ZOCBF}. These figures show that the proposed method ensures the robot's rollover safety by filtering out the unsafe nominal controller through the ZOCBF-based framework. A parallel simulation-based ZOCBF safety filter is also implemented, which shows a similar performance and is neglected due to space limit. We refer interested readers to the code webpage\textsuperscript{\ref{footnote:code}} for details.

\section{Conclusion}
This paper introduced the notion of zero-order control barrier functions (ZOCBFs) for ensuring the safety of sampled-data systems. Our ZOCBF approach extends conventional CBFs by effectively and straightforwardly handling safety constraints with high-relative degrees and those dependent on both states and inputs. We presented three numerical implementation methods—dynamic linearization, numerical integration, and parallel simulation—each suited to different system dynamics and computational capabilities. Through simulations involving collision avoidance and rollover prevention on uneven terrain, we demonstrated the practical effectiveness and versatility of ZOCBFs in maintaining safety-critical operations. This advancement simplifies the design and deployment of safety filters in sampled-data systems. Future work will explore ZOCBFs implementations in higher-dimensional and more complex robotic systems.

\begin{spacing}{0.98}
\bibliographystyle{IEEEtran}
\bibliography{IEEEabrv,references}
\end{spacing}

\end{document}